% Template for ICASSP-2020 paper; to be used with:
%          spconf.sty  - ICASSP/ICIP LaTeX style file, and
%          IEEEbib.bst - IEEE bibliography style file.
% --------------------------------------------------------------------------
\documentclass{article}
\usepackage{spconf}
% \documentclass[sigconf]{acmart}
%%
%% \BibTeX command to typeset BibTeX logo in the docs
% \AtBeginDocument{%
%   \providecommand\BibTeX{{%
%     Bib\TeX}}}
% \IEEEoverridecommandlockouts

\usepackage{amsmath, bbm}
\usepackage{amssymb}
\usepackage{amsfonts}
\usepackage{amsthm}
\usepackage{algorithm}
\usepackage{algorithmic}
\usepackage{nicefrac}
\usepackage{bm}
\usepackage{hyperref}
\usepackage[utf8]{inputenc} 
\usepackage[T1]{fontenc}
\usepackage{graphicx}
\usepackage[font=small]{caption}
\usepackage{tikz}
\usetikzlibrary{positioning}
\usepackage{pgfplots}
\usepackage{epstopdf}
\usepackage[labelformat=simple]{subcaption}
\usepackage{multirow}
\usepackage{lipsum}
\usepackage{mathtools}

\long\def\comment#1{}

\newtheorem{theorem}{Theorem}

\input{mysymbol.sty}

\usepackage{etoolbox} % add in preamble

\pgfplotsset{compat=1.18}

\begin{document}

\title{
LINK-SHARING BACKPRESSURE ROUTING IN WIRELESS MULTI-HOP NETWORKS
\thanks{
    Emails: $^{\star}$\{zhongyuan.zhao, yujun.ming, segarra\}@rice.edu, $^{\ddag}$\{ananthram.swami, kevin.s.chan, fikadu.t.dagefu\}.civ@army.mil}
}

% Two addresses (uncomment and modify for two-address case).
% ----------------------------------------------------------
\name{Zhongyuan Zhao$^{\star}$, Yujun Ming$^{\star}$, 
{Ananthram Swami}$^{\ddag}$, Kevin Chan$^{\ddag}$, Fikadu Dagefu$^{\ddag}$,
{Santiago Segarra}$^{\star}$}
\address{\textit{$^\star$Rice University, USA \hspace{10mm}  \hspace{2mm}  $^\ddag$US Army DEVCOM Army Research Laboratory, USA}
}

\maketitle

\begin{abstract}
Backpressure (BP) routing and scheduling is an established resource allocation method for wireless multi-hop networks, noted for its fully distributed operation and maximum queue stability. 
Recent advances in shortest path-biased BP routing (SP-BP) mitigate shortcomings such as slow startup and random walks, yet exclusive link-level commodity selection still causes last-packet problem and bandwidth underutilization. 
By revisiting the Lyapunov drift theory underlying BP, we show that the legacy exclusive commodity selection is unnecessary, and propose a Maximum Utility (MaxU) link-sharing method to expand its performance envelope without increasing
control message overhead. 
Numerical results show that MaxU SP-BP substantially mitigates the last-packet problem and slightly expands the network capacity region.
\end{abstract}

\begin{keywords}
Backpressure routing, link sharing, queueing networks, Lyapunov drift, distributed algorithms
\end{keywords}
%

% \vspace{-0.05in}
\section{Introduction}\label{sec:intro}
% \vspace{-0.1in}

Backpressure (BP) routing \cite{tassiulas1992, neely2005dynamic, georgiadis2006resource} is a well-established resource allocation method for wireless multi-hop networks, applicable to mobile ad-hoc/sensor networks, xG (device-to-device, wireless backhaul, and non-terrestrial coverage), vehicular, Internet-of-Things (IoT), and machine-to-machine (M2M) communications \cite{kott2016internet,akyildiz20206g,chen2021massive,noor20226g}.
BP schemes rely on per-destination queues, where packets destined to node $c$ are referred as commodity $c$.
As a framework for joint routing and scheduling built on per-link exclusive commodity selection and MaxWeight scheduling \cite{tassiulas1992}, BP effectively manages inter-flow interference and enables packets to explore all possible routes. 
Theoretically, it guarantees maximum queue stability within the network capacity region (throughput optimality)~\cite{neely2005dynamic, georgiadis2006resource, neely2022stochastic}. 
In practice, BP supports fully distributed operations using distributed schedulers \cite{joo2012local,zhao2021icassp,zhao2022twc,zhao2023graphbased}, offering efficiency, scalability, and robustness against single-point-of-failure, which are critical for military, disaster relief, and xG backhaul communications.
BP algorithms also extend to edge computing \cite{Kamran2022deco,lin2020distributed,zhao2025icassp} and traffic signal control \cite{levin2023maxpressure}.

The original BP scheme has well-known shortcomings such as slow startup, random walk, and the last-packet problem (LPP)~\cite{neely2005dynamic,georgiadis2006resource,neely2022stochastic,jiao2015virtual,cui2016enhancing,gao2018bias}. 
To address these, various improvements have been proposed, including queue-agnostic biases~\cite{neely2005dynamic, georgiadis2006resource, neely2022stochastic, jiao2015virtual, zhao2023icassp, zhao2023enhanced, zhao2024tmlcn}, virtual queues~\cite{ji2012delay, cui2016enhancing, hai2018delay, zhao2024tmlcn}, and route restrictions~\cite{ying2010combining, Rai2017loop, yin2017improving}. 
SP-BP generalizes BP by adding a distance potential to the backlog while retaining its throughput optimality~\cite{neely2005dynamic, georgiadis2006resource}. 
The state-of-the-art (SOTA) SP-BP~\cite{zhao2023icassp, zhao2023enhanced, zhao2024tmlcn} resolves slow startup and random walk, improving latency and throughput with minimal additional overheads. 
In particular, the shortest path bias is computed at near linear time complexity~\cite{bernstein2019distributed} based on optimally scaled, delay-aware edge weights that encode long-term link rates and conflict relationships of wireless links~\cite{zhao2023icassp, zhao2023enhanced, zhao2024tmlcn}. 

However, the last-packet problem persists due to exclusive commodity selection, which can starve short-lived traffic lacking sustained pressure from newly injected packets~\cite{Alresaini2016bp,ji2012delay,Erfaniantaghvayi2024}.
Existing solutions include virtual queues (HOL~\cite{ji2012delay}, SJB~\cite{Alresaini2016bp}, expQ~\cite{zhao2024tmlcn}) that prioritize older packets and separated queueing systems~\cite{Erfaniantaghvayi2024}, but these approaches can shrink the network capacity region~\cite{zhao2024tmlcn,Erfaniantaghvayi2024}.
Moreover, with increasingly diverse applications such as M2M~\cite{chen2021massive} and IoT~\cite{kott2016internet} characterized by numerous bursty flows, exclusive commodity selection can significantly underutilize advanced high-bandwidth air-interfaces.
This issue cannot be fully addressed by employing larger packet sizes or shorter time slots, due to inherent limits in channel coding, synchronization, and transmit/receive (T/R) switching in radio frontends.

To address this fundamental limitation of SP-BP, we revisit its underlying Lyapunov drift theory~\cite{neely2005dynamic,georgiadis2006resource,neely2022stochastic} and show that link-level exclusive commodity selection is unnecessary.
We introduce a link-sharing approach of MaxU multi-commodity selection, integrate it into SOTA SP-BP~\cite{zhao2024tmlcn} to form MaxU SP-BP, substantially mitigating the last-packet problem while improving link bandwidth utilization.

\noindent
{\bf Contribution:} The contributions of this paper are as follows:\\
\textit{1) Algorithmic:} We develop MaxU multi-commodity selection to replace the  exclusive approach in the BP family, improving latency and link utilization under diverse traffic.\\
\textit{2) Theoretical:} We prove that MaxU SP-BP does not shrink the network capacity region of classic SP-BP, thereby preserving throughput optimality (or maximum queue stability). \\
\textit{3) Empirical:} Our simulation shows that MaxU SP-BP expands the performance envelope of classic SP-BP by significantly reducing latency under mixed streaming and bursty traffic and slightly increasing end-to-end throughput.

% \vspace{-.2mm}
\noindent
{\bf Notation:} 
$ |\cdot| $ represents the cardinality of a set.
$ \mathbbm{1}(\cdot) $ is the indicator function.
$ \mathbb{E}(\cdot) $ stands for expectation.
Upright bold lower-case symbol, e.g., $\bbz$, denotes a column vector, and $\bbz_i$ denotes the $i$-th element of vector $\bbz$. 
Upright bold upper-case symbol, e.g., $\bbZ$, denotes a matrix, and calligraphic upper-case symbol denotes a set, e.g., $\ccalG$ for a graph.

% \vspace{-0.1in}
\section{Problem Formulation}
\label{sec:sys}
% \vspace{-0.05in}

We model a wireless multi-hop network with a  \emph{connectivity graph} $\ccalG^n=(\ccalV,\ccalE)$, and a \emph{conflict graph} $\ccalG^c=(\ccalE,\ccalH)$, as illustrated in Fig.~\ref{F:graph}.
In the connectivity graph $\ccalG^n$, a node $i\in\ccalV$ denotes a device $i$, and a directed edge $(i,j)\in\ccalE$ represents that node $i$ can transmit data to node $j$ directly over-the-air.
Graph $\ccalG^{n}$ is assumed to be strongly connected, meaning that a directed path always exists between any pair of nodes.

In the conflict graph $\ccalG^c$, each vertex $e\in\ccalE$ corresponds to a link in $\ccalG^{n}$ (hence the notation $\ccalE$ is reused) and an undirected edge $(e_1, e_2)\in\ccalH$ captures a conflict between wireless links $e_1, e_2\in\ccalE$ in $\ccalG^{n}$.
Two wireless links may be in conflict due to 1) \emph{interface conflict} (sharing the same transceiver) or 2) \emph{wireless interference} (simultaneous transmission causing outage).
In our distributed approach, the global knowledge of $\ccalG^c$ is not required, instead, each transceiver only needs to maintain a list of conflict neighbors by channel monitoring~\cite[Supp]{zhao2022twc}.

We consider a time-slotted medium access control (MAC) in the wireless network.
Vector $\grave{\bbr}(t)=[\grave{\bbr}_{ij}(t)\mid(i,j)\in\ccalE]\in\reals^{|\ccalE|}$ collects the real-time link rates 
of all links in time slot $t$.
Vector $\bbr=\mathbbm{E}_{t}[\grave{\bbr}(t)]\in\reals^{|\ccalE|}$ collects the long-term average link rates. 
Link rates are measured in packets per time slot.

\vspace{1mm}
\noindent\textbf{Queueing system:} 
We denote the set of all commodities (destinations) as $\ccalC=\ccalV\cup\ccalW$, where $\ccalW$ is the set of virtual sinks in service-centric networks~\cite{zhao2025icassp}, and $\ccalV\cap\ccalW=\emptyset$. 
Each device hosts per-commodity queues with a first-in-first-out (FIFO) discipline.
The queue length of commodity $c$ on device $i\in\ccalV$ at the beginning of time slot $t$ is denoted by $Q_i^{(c)}(t)$.
We have $ Q_i^{(i)}(t)=0 $ and  $Q_i^{(c)}(t)$ for $c\neq i$ evolves as follows: 
\begin{subequations}\label{E:queue}
\begin{align}
& Q_{i}^{(c)}\!(t+1) =\! Q_{i}^{(c)}(t) \!-\! M_{i-}^{(c)}(t) + M_{i+}^{(c)}(t) + A_{i}^{(c)}(t),\\
& M_{i-}^{(c)}(t) = \!\!\sum_{(i,j)\in\ccalE}\! \mu_{ij}^{(c)}(t),\;\; M_{i+}^{(c)}(t) = \!\!\sum_{(j,i)\in\ccalE} \!\mu_{ji}^{(c)}(t) \;.
\end{align}
\end{subequations}
In~\eqref{E:queue}, for commodity $c$ in time slot $t$, $ M_{i-}^{(c)}(t), M_{i+}^{(c)}(t) $ are the total outgoing and incoming packets on device $i$, 
$\mu_{ij}^{(c)}(t)$ is the number of packets transmitted over link $(i,j)$, 
and $A_{i}^{(c)}(t)$ stands for the number of new packets generated by users on device $i$. 
The biased backlog for commodity $c$ is defined as
\begin{equation}\label{E:formulation:bb}   
 U_{i}^{(c)}(t) = Q_{i}^{(c)}(t) + B_{i}^{(c)}, \;\forall\; i\in\ccalV,c\in\ccalC\;, 
\end{equation}
where $ 0\!\leq \! B_{i}^{(c)}\!\!<\!\infty $ is a queue-agnostic, non-negative bias, representing the shortest path distance from $i$ to $c$~\cite{neely2005dynamic,zhao2024tmlcn}.
The bias matrix $\bbB=[B_i^{(c)} \mid i\in\ccalV, c\in\ccalC ]\in\reals_+^{|\ccalV|\times|\ccalC|}$ can be found by a SOTA weighted all-pairs shortest-path (APSP) algorithm with time complexity near-linear in $|\ccalV|$~\cite{bernstein2019distributed}.
The (back)pressure of commodity $c$ on link $(i,j)$ is defined as
\begin{equation}\label{E:backpressure}
U_{ij}^{(c)}(t)= U_{i}^{(c)}(t) - U_{j}^{(c)}(t), \;\forall\; (i,j)\in\ccalE, c\in\ccalC\;.    
\end{equation}

\begin{figure}[t]
    \centering
    %\vspace{-0.1in}
    \includegraphics[width=0.95\linewidth]{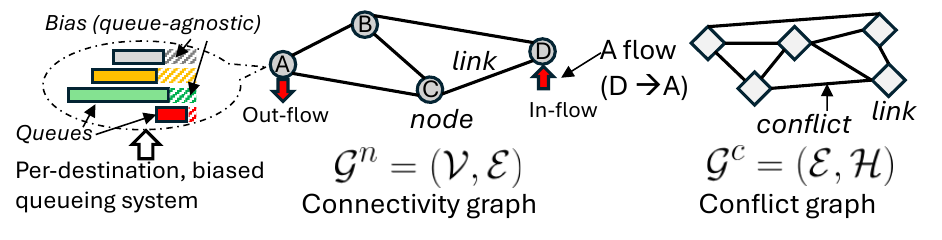}
    \vspace{-0.2in}
    \caption{Graph modeling for wireless multi-hop networks}
    \label{F:graph}
    \vspace{-0.2in}
\end{figure}

\vspace{1mm}
\noindent
\textbf{Lyapunov Drift Minimization:}
SP-BP routing and scheduling is formulated as finding the optimal rate assignments for all link-commodity pairs at every time step $t$, in order to maximize queue stability within the network capacity region.
This is achieved by minimizing the upper bound of the Lyapunov drift of the biased backlogs -- equivalent to solving the following optimization for every time slot $t$~\cite[Theorem 1]{zhao2024tmlcn}:
\begin{subequations}\label{E:formulation}
	\begin{align}
		\bbM^{*}(t) &= \argmax_{\bbM(t)} \sum_{i\in\ccalV}\sum_{j\in\ccalV}\sum_{c\in\ccalC} \mu_{ij}^{(c)}\!(t) U_{ij}^{(c)}\!(t)\label{E:formulation:obj}\\
		\text{s.t. } 
		  \bbM(t) & = \left[\mu_{ij}^{(c)}(t) \mid (i,j)\in\ccalE, c\in\ccalC \right] \;, \label{E:formulation:M}  \\
		 \mu_{ij}^{(c)}\!(t)&\geq 0,\;\forall\; (i,j)\in\ccalE, c\in\ccalC,\; \label{E:formulation:mu} \\
		 \grave{\bbr}_{ij}(t) &\geq  \sum_{c\in\ccalC} \mu_{ij}^{(c)}\!(t), \;\forall\; (i,j)\in\ccalE  \;, \label{E:formulation:link} \\ 
		 Q_{i}^{(c)}(t) & \geq \sum_{j\in\ccalV} \mu_{ij}^{(c)}(t), \;\forall\; i\in\ccalV, c\in\ccalC  \;, \label{E:formulation:comm} \\
      1 & \geq \bbx_{e_1}(t)+\bbx_{e_2}(t),\; \forall\; (e_1,e_2)\in\ccalH\;, \label{E:formulation:conflict}\\
      \bbx_{e}(t)&=\mathbbm{1}\bigg( \sum_{c\in\ccalC}\mu_{ij}^{(c)}(t) \bigg) \;, e=(i,j)\in\ccalE\;. \label{E:formulation:x}
	\end{align} 
\end{subequations}
In \eqref{E:formulation:obj}, the objective function seeks to maximize the rate for those links and commodities with higher pressure [cf.~\eqref{E:backpressure}], hence the name backpressure routing.
The constraints in \eqref{E:formulation} are explained as follows:
\eqref{E:formulation:M} specifies the decision variables as the \emph{final} rate assignments for all link-commodity pairs, e.g.,~\eqref{E:quota} in Section~\ref{sec:solution}.
\eqref{E:formulation:mu} states that the rate assignment of a link-commodity pair should be non-negative. 
\eqref{E:formulation:link} states that the sum rate of all commodities on a link should not exceed the real-time link rate. 
\eqref{E:formulation:comm} states that the sum rate of a commodity across the outgoing links of a device should not exceed its queue length.
\eqref{E:formulation:conflict} and \eqref{E:formulation:x} prohibit simultaneous transmissions on conflicting links.

\vspace{1mm}\noindent
\textbf{Classic SP-BP} approximately solves the optimization in~\eqref{E:formulation} by replacing the constraints in~\eqref{E:formulation:link} and \eqref{E:formulation:comm} with a stricter one, i.e., only one commodity per link per time slot:
\begin{equation}\label{E:optcmdy}
    \sum_{c\in\ccalC}\mathbbm{1}\Big(\mu_{ij}^{(c)}(t)>0\Big) \leq 1,\;\forall\; (i,j)\in \ccalE\;.
\end{equation}
This leads to the following 4-step operations of classic SP-BP~\cite{neely2005dynamic, georgiadis2006resource, neely2022stochastic, zhao2023icassp, zhao2023enhanced, zhao2024tmlcn}, expressed in a slightly different form for the convenience of proof and comparison.

\noindent \textbf{Step 1}. The optimal commodity $c_{ij}^{*}(t)$ on each {directed} link (${i,j}$) is selected as the one with the maximal pressure, i.e., 
\begin{equation}\label{E:commodity}
    c_{ij}^{*}(t) =\argmax_{c\in\ccalV}\; U_{ij}^{(c)}(t) \;,
\end{equation}

\noindent \textbf{Step 2}. All of the real-time link rate $\grave{\bbr}_{ij}(t)$ of link $(i,j)$ is initially assigned to its optimal commodity $c_{ij}^{*}(t)$,
\begin{equation}\label{E:gamma:original}
    \gamma_{ij}^{(c)}\!(t) \!=\! \begin{cases}
         \min\!\left\{\grave{\bbr}_{ij}(t),Q_{i}^{(c)}\!(t)\right\}, & \text{if } c=c_{ij}^{*}\!(t), U^{(c)}_{ij}\!(t)>0, \\
         0, & \text{otherwise}.
    \end{cases}    
\end{equation}
Steps 3--4 follow~\eqref{E:weight} to~\eqref{E:quota} in Section~\ref{sec:solution}.

\section{Maximum Utility (MaxU) SP-BP}\label{sec:solution}

By laying out the actual constraints of routing and scheduling decisions for throughput optimality in~\eqref{E:formulation}, it is evident that the legacy constraint in~\eqref{E:optcmdy} is overly restrictive.
With looser constraints in~\eqref{E:formulation:link} and \eqref{E:formulation:comm}, we can replace the first two steps of classic SP-BP with link-sharing commodity selection to utilize the residual link rate left by the optimal commodity.
This leads to our 4-step operation of MaxU SP-BP:

\vspace{1mm}
\noindent\textbf{Step 1.}
Filter out commodities $c\in\ccalC$ on each link $(i,j)\in\ccalE$:
\begin{equation}\label{E:mu:select}
\alpha_{ij}^{(c)}(t)= \mathbbm{1}\!\left( U_{ij}^{(c)}\!(t) > 0 \right)\cdot\mathbbm{1}\!\left(Q_i^{(c)}\!(t)>0\right)\;.
\end{equation}

\noindent\textbf{Step 2.}
On each directed link $(i,j)$, first sort the remaining commodities decreasingly by pressure: 
\begin{equation}\label{E:sort}
    \tilde{\bbc} = \underset{c\in\ccalC}{\arg\textup{sort}_{\downarrow}} \left[\alpha_{ij}^{(c)}(t) \cdot U_{ij}^{(c)}(t)\right] \;,  
\end{equation}
where vector $\tilde{\bbc}$ collects all commodities in sorted order, and $\tilde{\bbc}_n$ is its $n$th element.
Next, sequentially allocate the residual link rate ${\acute{\bbr}}_{ij}(t)$ to the remaining commodities in sorted order $n=1,2,\dots,|\ccalC|$:
\begin{subequations}\label{E:allocate}
\begin{align}
    \acute{\bbr}_{ij}(t) &= \max\left\{\grave{\bbr}_{ij}(t) -\!\! \sum_{m=1}^{n-1}\gamma_{ij}^{(\tilde{\bbc}_m)}\!(t), 0\right\} \;,\\
    \gamma_{ij}^{(\tilde{\bbc}_n)}\!(t) &= \min\left\{\alpha_{ij}^{(\tilde{\bbc}_n)}\!(t)\cdot\acute{\bbr}_{ij}(t),\; Q_i^{(\tilde{\bbc}_n)}\!(t)\right\}\;.
\end{align}
\end{subequations}
\noindent \textbf{Step 3}. Compute the link utility vector $\bbw(t)\!=\!\big[w_{ij}(t) \mid (i,j)\in\ccalE\big]$, where the utility of link (${i,j}$) is found as
\begin{equation}\label{E:weight}
    w_{ij}(t) =\sum_{c\in\ccalC} \gamma_{ij}^{(c)}(t)\cdot\max\left\{U_{ij}^{(c)}(t),0 \right\}.
\end{equation}
Next, MaxWeight scheduling \cite{tassiulas1992} finds the schedule $\bbx(t)\in\{0,1\}^{|\ccalE|}$ in order to activate a set of \emph{non-conflicting links} achieving the maximum total utility as follows: 
\begin{subequations}\label{E:scheduling}
\begin{align}
    \bbx (t) &= \argmax_{\grave{\bbx} (t)\in \{0,1\}^{|\ccalE|} } ~ \grave{\bbx}(t)^\top  \bbw(t) \;,\label{E:scheduling:obj} \\
    \text{s.t., } \grave{\bbx}_{e_1}(t) &+ \grave{\bbx}_{e_2}(t)\leq 1,\;\forall\; (e_1,e_2)\in\ccalH\;,\label{E:scheduling:conflict}
\end{align}
\end{subequations}
where the constraint in~\eqref{E:scheduling:conflict} is based on \eqref{E:formulation:conflict} and \eqref{E:formulation:x}, which states that conflicting links should not be scheduled together.
\eqref{E:scheduling} defines an NP-hard maximum weighted independent set (MWIS) problem \cite{joo2010complexity} on the conflict graph, which can be approximated by distributed heuristics in practice, such as the LGS~\cite{joo2012local} used in our test and GCN-LGS~\cite{zhao2021icassp, zhao2022twc, zhao2023graphbased}.

\noindent \textbf{Step 4}. 
Apply the schedule to the initial assignment to obtain the final assignment for both routing and link scheduling
\begin{equation}\label{E:quota}
    \mu_{ij}^{(c)}\!(t) = {\gamma_{ij}^{(c)}\!(t)}\cdot \bbx_{ij} (t) ,\;\forall (i,j)\in\ccalE, c\in\ccalC.
\end{equation}
In Fig.~\ref{fig:linksharing}, we illustrate the exclusive and link-sharing approaches under different link rates with a minimal example.

\begin{figure}[t]
    \centering
    \includegraphics[width=0.92\linewidth]{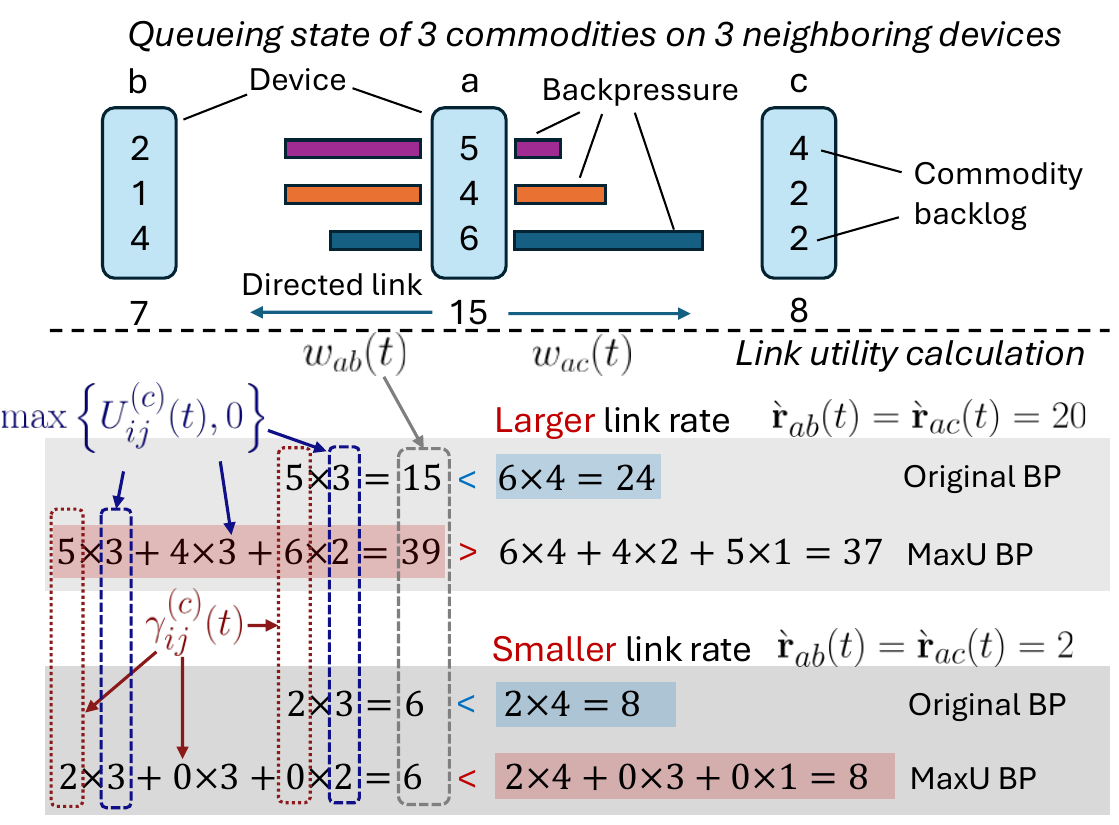}
    \vspace{-0.1in}
    \caption{A mini example of commodity selection and link utility calculation with three devices and three commodities. $B_{i}^{(c)}=0$.}
    \label{fig:linksharing}
    \vspace{-0.2in}
\end{figure}

\vspace{1mm}
\noindent\textbf{Complexity:}
MaxU selection (\eqref{E:mu:select}--\eqref{E:allocate}) has an asymptotic time complexity of $\ccalO(|\ccalC|\log |\ccalC|)$ due to sorting, higher than $\ccalO(|\ccalC|)$ for the legacy approach in~\eqref{E:commodity}--\eqref{E:gamma:original}.

\begin{figure*}[!t]
	\centering
	\vspace{-0.05in}
    \hspace{-3mm}
    \subfloat[]{
    	\includegraphics[height=1.7in]{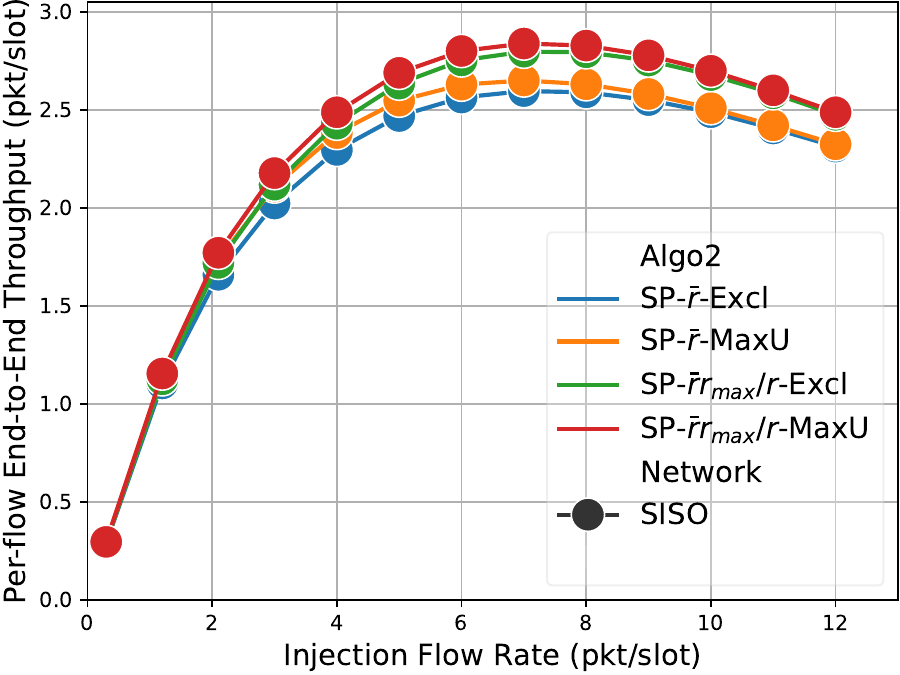}
        \label{fig:thpt}\vspace{-0.1in}
    }
	\subfloat[]{
        \includegraphics[height=1.7in]{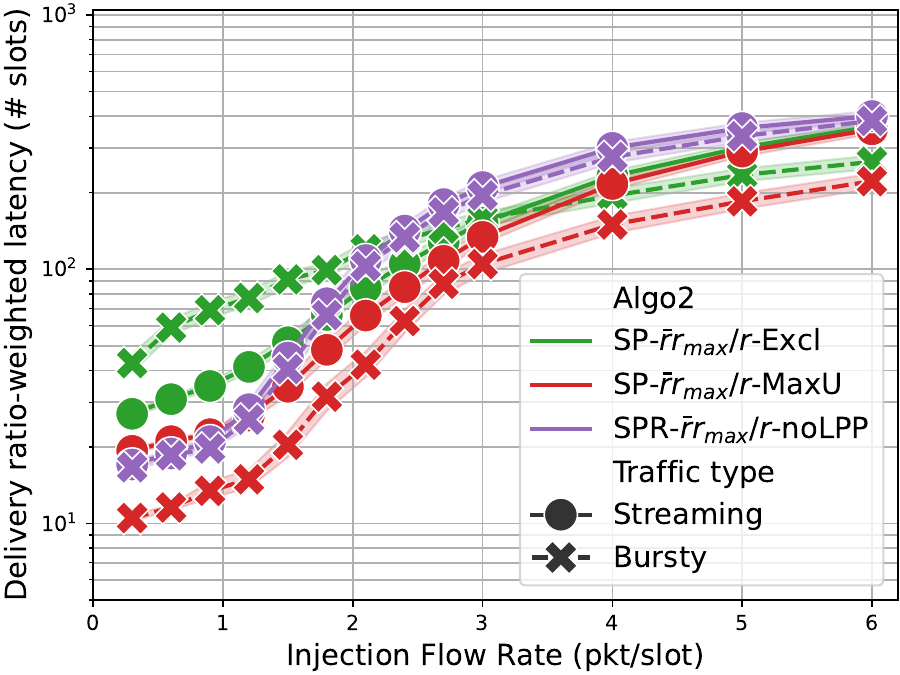}
		\label{fig:e2e_by_lambda}\vspace{-0.1in}
	}
	\subfloat[]{
		\includegraphics[height=1.7in]{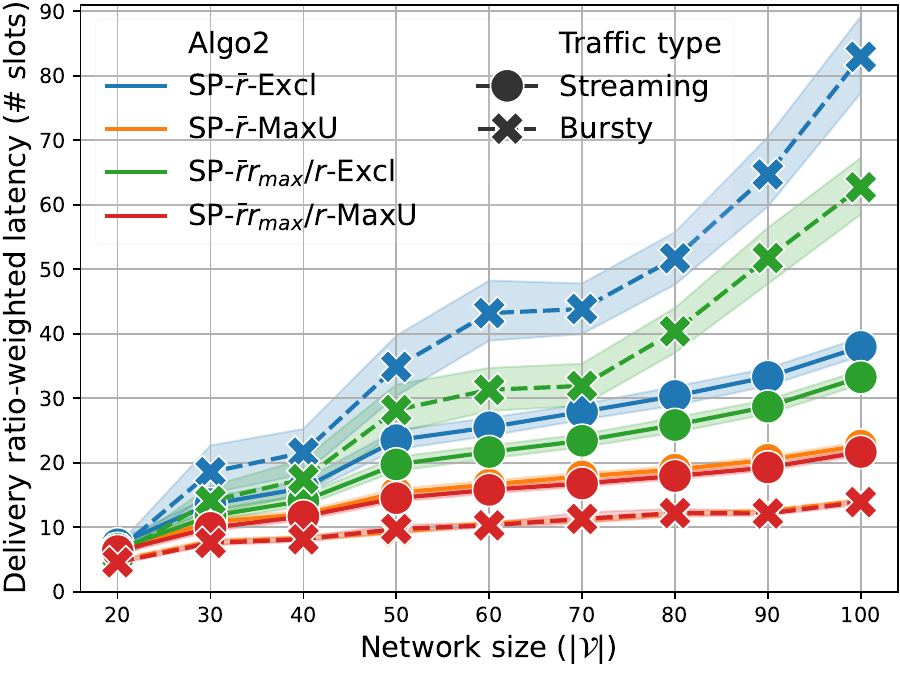}
		\label{fig:network:e2e}\vspace{-0.1in}
	}
    %\hspace{-4mm}
	\vspace{-0.1in}
	\caption{
    (a) Average throughput per flow versus flow rate $\lambda$, where all flows are streaming at identical flow rate; 
    (b) Average composite latency (Latency $\times$ delivery ratio  + $T$($1-$ delivery ratio)) by flow rate $\lambda$ under mixed traffic.
    Both (a) and (b) are on random networks of 100 nodes. 
    (c) Average composite latency in networks of 20-100 nodes under mixed traffic.
    $T=1000$.
     The bands indicate $95\%$ confidence interval. 
	}
	\label{fig:network}    
	\vspace{-0.2in}
    % \Description{test results curves}
\end{figure*}

\vspace{1mm}\noindent
\textbf{Throughput Optimality:}
Next, we prove that MaxU SP-BP weakly dominates classic SP-BP in network capacity region, inheriting its proven throughput optimality~\cite[Theorem 1]{zhao2024tmlcn}:
\begin{theorem}\label{th:dominance}
With everything else equal, MaxU SP-BP does not shrink the network capacity region of classic SP-BP.
\end{theorem}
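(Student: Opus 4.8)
The plan is to establish a per-slot, per-state dominance of the Lyapunov drift: for every backlog configuration and every realization of the link rates $\grave{\bbr}(t)$, the drift-reduction term $\sum_{i,j,c}\mu_{ij}^{(c)}(t)\,U_{ij}^{(c)}(t)$ attained by MaxU SP-BP is at least as large as the one attained by classic SP-BP. Because the throughput optimality of classic SP-BP in \cite[Theorem 1]{zhao2024tmlcn} rests on making precisely this term large enough in the Lyapunov drift upper bound of the biased backlogs $U_i^{(c)}(t)=Q_i^{(c)}(t)+B_i^{(c)}$ [cf.~\eqref{E:formulation:bb}], a pointwise-larger drift-reduction term yields a pointwise-smaller drift bound; hence MaxU SP-BP keeps every arrival rate stable that classic SP-BP keeps stable, i.e., all of the interior of the network capacity region $\Lambda$. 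Since $\Lambda$ upper-bounds the stability region of any policy, MaxU SP-BP then attains exactly $\Lambda$, which is the assertion. Here ``everything else equal'' means both schemes share the bias matrix $\bbB$, the conflict graph $\ccalG^c$, and the MWIS routine in \eqref{E:scheduling}, so only the commodity-selection and rate-allocation steps must be compared.

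The first step is a per-link utility inequality $w_{ij}(t)\ge w_{ij}^{\mathrm{classic}}(t)$ for every $(i,j)\in\ccalE$. Fix a link; by \eqref{E:mu:select} the filter $\alpha_{ij}^{(c)}(t)$ discards commodities with nonpositive pressure or empty queue, so choosing $\{\gamma_{ij}^{(c)}(t)\}_{c\in\ccalC}$ amounts to the linear program of maximizing $\sum_{c\in\ccalC}\gamma_{ij}^{(c)}\max\{U_{ij}^{(c)}(t),0\}$ subject to $0\le\gamma_{ij}^{(c)}\le Q_i^{(c)}(t)$ and $\sum_{c\in\ccalC}\gamma_{ij}^{(c)}\le\grave{\bbr}_{ij}(t)$ [cf.~\eqref{E:weight}]. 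This is a continuous-knapsack instance in which every unit of rate has unit cost, so the sort-and-fill rule \eqref{E:sort}--\eqref{E:allocate} of MaxU is optimal for it. The assignment \eqref{E:gamma:original} of classic SP-BP --- all of the rate (capped by the queue) on the single maximum-pressure commodity from \eqref{E:commodity} --- is a feasible point of the same program, hence its value cannot exceed the MaxU value; equality holds exactly when that commodity's queue already absorbs the full link rate, and the gap is strict precisely in the residual-bandwidth regime motivating this work.

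The second step lifts this to the joint routing-and-scheduling objective. Writing $\mu_{ij}^{(c)}(t)=\gamma_{ij}^{(c)}(t)\,\bbx_{ij}(t)$ from \eqref{E:quota} and using \eqref{E:weight}, the attained objective of either scheme equals $\bbx(t)^{\top}\bbw(t)$ with that scheme's own link-utility vector $\bbw(t)\ge\bbzero$ and schedule $\bbx(t)\in\{0,1\}^{|\ccalE|}$. Let $\bbw^{M}\ge\bbw^{C}\ge\bbzero$ be the MaxU and classic utility vectors (by the previous step) and $\bbx^{M},\bbx^{C}$ the corresponding solutions of \eqref{E:scheduling} on the common conflict graph. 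Then $\bbx^{C}$ is a feasible independent set for the MaxU instance and $\bbx^{M}$ is optimal for weights $\bbw^{M}$, so $(\bbx^{M})^{\top}\bbw^{M}\ge(\bbx^{C})^{\top}\bbw^{M}\ge(\bbx^{C})^{\top}\bbw^{C}$, where the last inequality uses $\bbw^{M}\ge\bbw^{C}$ and $\bbx^{C}\ge\bbzero$ componentwise. This is exactly the per-slot drift-reduction dominance fed into the Lyapunov argument above.

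I expect the main obstacle to be this last coupling between commodity selection and link scheduling: since the two schemes pass \emph{different} weight vectors to the MWIS routine, one must rule out that MaxU's weakly larger per-link weights get squandered by the scheduler picking a worse set of links; the monotonicity inequality above settles this when \eqref{E:scheduling} is solved exactly, as assumed in the analysis of classic SP-BP, and the argument would need revisiting for a fixed approximate scheduler. A secondary point to verify carefully is that MaxU introduces no constraint violation absent from classic SP-BP --- in particular that per-commodity outflow aggregated over a node's outgoing links still respects \eqref{E:formulation:comm} under the link-by-link allocation \eqref{E:allocate} --- so that the exact Lyapunov drift bound of \cite[Theorem 1]{zhao2024tmlcn}, and not a looser one, transfers verbatim to MaxU SP-BP.
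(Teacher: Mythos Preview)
Your proposal is correct and follows essentially the same approach as the paper: establish the per-link utility dominance $w_{ij}^{M}(t)\ge w_{ij}^{C}(t)$, lift it to the scheduled objective via monotonicity of the MWIS value on the common conflict graph, and then invoke the Lyapunov drift bound of \cite[Theorem~1]{zhao2024tmlcn}. Your continuous-knapsack argument for the per-link step is a slightly cleaner substitute for the paper's two-case split on whether $\grave{\bbr}_{ij}(t)\lessgtr Q_i^{(c_{ij}^*(t))}(t)$, and your explicit chain $(\bbx^{M})^{\top}\bbw^{M}\ge(\bbx^{C})^{\top}\bbw^{M}\ge(\bbx^{C})^{\top}\bbw^{C}$ makes the reliance on exact MWIS more transparent than the paper does, but the overall structure is identical.
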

% \vspace{-0.1in}
\begin{proof}
We mark the link-commodity rate assignments of the classic SP-BP as $\hat{\bbM}(t)$ and that of MaxU SP-BP as $\tilde{\bbM}(t)$, and likewise for other intermediate variables.
For each link in an arbitrary time slot $t$, there are two mutually exclusive cases:

\noindent
\textit{Case 1:} $\grave{\bbr}_{ij}(t) \leq Q_{i}^{(c_{ij}^{*}(t))}\!(t)$,  steps 1-2 in~\eqref{E:mu:select}--\eqref{E:allocate} are equivalent to steps 1-2 in~\eqref{E:commodity}--\eqref{E:gamma:original}, i.e., $\hat{\gamma}^{(c)}_{ij}(t)=\tilde{\gamma}^{(c)}_{ij}(t)$ for all $(i,j)\in\ccalE, c\in\ccalC$, as illustrated in Fig.~\ref{fig:linksharing}.
In this case, MaxU SP-BP and classic SP-BP are equivalent. 

\noindent
\textit{Case 2:} $\grave{\bbr}_{ij}(t) > Q_{i}^{(c_{ij}^{*}(t))}\!(t)$, the preliminary rate assignment under MaxU SP-BP always includes that of the optimal commodity under the classic SP-BP, given the same queueing state $\bbQ(t)$ and biases $\bbB$.
For all $(i,j)\in\ccalE$, we have 
\begin{equation}\label{E:case2}
\tilde{\gamma}^{(c)}_{ij}(t)\begin{cases}
    = \hat{\gamma}^{(c)}_{ij}(t), & c=c^*_{ij}(t)\\ 
    \geq\hat{\gamma}^{(c)}_{ij}(t)=0, & c\neq c^*_{ij}(t), U^{(c)}_{ij}(t) >0\\ 
    = \hat{\gamma}^{(c)}_{ij}(t) =0, & c\neq c^*_{ij}(t), U^{(c)}_{ij}(t) \leq 0
\end{cases}\;.
\end{equation}

Since case 2 could not be ruled out  when queue stability is maintained under the network capacity region of classic SP-BP (denoted by $\hat\Lambda$),
based on~\eqref{E:case2} and \eqref{E:weight}, we have $\tilde{w}_{ij}(t)\geq\hat{w}_{ij}(t)$, $\forall\; (i,j)\in\ccalE$.
Consequently, with identical $\ccalG^c$, 
\begin{equation}\label{E:mwis:unequal}
\mathbb{E}\Bigg[\sum_{(i,j)\in\ccalE} \tilde{\bbx}_{ij}(t)\tilde{w}_{ij}(t)\Bigg]\geq\mathbb{E}\Bigg[\sum_{(i,j)\in\ccalE} \hat{\bbx}_{ij}(t)\hat{w}_{ij}(t)\Bigg]\;,    
\end{equation} 
since $w_{ij}(t)$ is independent from the topology of $\ccalG^c$.
Based on \eqref{E:weight}, \eqref{E:quota}, and~\eqref{E:mwis:unequal}, the expectation of the objective function in~\eqref{E:formulation} under MaxU SP-BP is greater than or equal to that under the classic SP-BP:
\begin{equation*} %\label{E:dominance}
    \mathbb{E}\Bigg[\sum_{(i,j)\in\ccalE}\sum_{c\in\ccalC} \tilde{\mu}_{ij}^{(c)}\!(t) U_{ij}^{(c)}\!(t)\Bigg] \!\geq\! \mathbb{E}\Bigg[\sum_{(i,j)\in\ccalE}\sum_{c\in\ccalC} \hat{\mu}_{ij}^{(c)}\!(t) U_{ij}^{(c)}\!(t)\Bigg].
\end{equation*}
Based on the above inequality and the proof of throughput optimality in \cite[Theorem 1]{zhao2024tmlcn}, 
for any exogenous packet arrival rate $\lambda\in\hat\Lambda$, 
MaxU SP-BP can also maintain the queueing stability. 
Therefore, $\hat\Lambda\subseteq\tilde\Lambda$, where $\tilde\Lambda$ is the network capacity region of MaxU SP-BP, which corresponds to weak dominance of MaxU SP-BP over classic SP-BP.
\end{proof}

% \vspace{-0.1in}
\section{Numerical experiments}
\label{sec:results}
\vspace{-0.05in}
We evaluate four SP-BP variants formed by combining two commodity-selection approaches (\textit{Excl} for classic exclusive selection and \textit{MaxU}) and two shortest path (SP) bias schemes from~\cite{zhao2024tmlcn}:
1) SP-$\bar{r}$, where each edge weight in the connectivity graph $\ccalG^n$ equals the network-wide average long-term link rate $\bar{r}$; and
2) SP-$\bar{r}r_{max}/r$, where the weight of link $e\in\ccalE$ is $\bar{r}r_{max}/r_e$, with $r_{max}$ denoting the maximum long-term link rate across the network\footnote{Code:
\url{https://github.com/zhongyuanzhao/modernSPBP}
}.
We also include a non-BP baseline, SPR-$\bar{r}r_{max}/r$-noLPP, which uses \textit{shortest-path routing (SPR)} with per-neighbor queues to eliminate the last-packet problem.
Simulations are conducted on random networks with settings emulating uniformly distributed wireless devices with omnidirectional antennas and identical transmit power, in fading channels with lognormal shadowing, see the extension of this work in~\cite[Sec. 6]{zhao2025mobihoc} for detailed setups.
Each point in Fig.~\ref{fig:network} is the average of 100 test instances.

In Fig.~\ref{fig:thpt}, we present the average per-flow throughput of the four SP-BP schemes in networks of $100$ nodes, with streaming flows between $40$ distinct source-destination pairs with identical flow rates $\lambda\in[0.1,12]$. 
Consistent with~\cite{zhao2024tmlcn}, incorporating more link features (SP-$\bar{r}r_{max}/r$) increases throughput over SP-$\bar{r}$, which already outperforms many BP variants. 
Importantly, MaxU consistently achieves slightly higher throughput than Excl with all else equal, providing empirical evidence for Theorem~\ref{th:dominance}.

In Fig.~\ref{fig:e2e_by_lambda}, we compare the delivery-ratio-weighted latency of SPR and two SP-BP schemes on networks of 100 nodes under mixed streaming and bursty traffic with identical loads ($\lambda\in\{0.3,\dots,6.0\}$), where bursts start at random times  $\mathbbm{U}(0,T-100)$ and last 30 time slots. 
MaxU SP-BP matches SPR for streaming and outperforms it for bursty traffic under light-to-medium loads ($\lambda\leq1.8$), where classic SP-BP lags, and maintains an advantage over both as load increases.

In Fig.~\ref{fig:network:e2e}, we show the delivery-ratio-weighted latency of the four SP-BP schemes on networks of varying sizes with mixed streaming and bursty traffic under lightweight load ($\lambda_c\sim\mathbb{U}(0.1,1.0)$). 
With all else equal, MaxU cuts the latency of Excl by $37\%\!\sim\!43\%$ for streaming flows and $78\%\!\sim\!84\%$ for bursty flows. 
Under MaxU, bursty traffic even attains lower latency than streaming, whereas under Excl it is the opposite. This result demonstrates that MaxU can effectively prevent starvation of short-lived traffic.

\vspace{-0.1in}
\section{Conclusions}
\label{sec:conclusions}
\vspace{-0.05in}

We introduce link-sharing commodity selection into SP-BP routing and scheduling to improve its link utilization and mitigate the last-packet problem, making it a competitive solution for modern networks with diverse traffic and advanced air-interfaces.
Theoretical and empirical results show that our MaxU SP-BP expands the performance envelope of classic SP-BP with a slightly higher asymptotic time complexity.
Future work will extend this framework to support air-interfaces with MIMO, full-duplex, and multi-modal capabilities.

% \vfill
\pagebreak

\section{Acknowledgment}
% \vspace{-0.05in}
Research was sponsored by the US Army DEVCOM ARL Army Research Office and was accomplished under Cooperative Agreement Number W911NF-24-2-0008. 
The views and conclusions contained in this document are those of the authors and should not be interpreted as representing the official policies, either expressed or implied, of the DEVCOM ARL Army Research Office or the U.S. Government. 
The U.S. Government is authorized to reproduce and distribute reprints for Government purposes notwithstanding any copyright notation herein.

% -------------------------------------------------------------------------
{
\footnotesize
\bibliographystyle{ieeetr}
\bibliography{strings,refs,refs_ol}
}

\end{document}